\documentclass[conference]{IEEEtran}
\IEEEoverridecommandlockouts

\usepackage{cite}
\usepackage{amsmath,amssymb,amsfonts}
\usepackage{algorithmic}
\usepackage{graphicx}
\usepackage{textcomp}
\usepackage{amsthm}

\usepackage{xcolor}
\def\BibTeX{{\rm B\kern-.05em{\sc i\kern-.025em b}\kern-.08em
    T\kern-.1667em\lower.7ex\hbox{E}\kern-.125emX}}

\theoremstyle{definition}

\newtheorem{theorem}{Theorem}

\newtheorem{lemma}{Lemma}

\newtheorem{definition}{Definition}

\begin{document}

\title{Two-Sided Bounds for Entropic Optimal Transport via a Rate-Distortion Integral\\
\thanks{This research 
was supported in part by NSF Grant DMS-2515510.}
}

\author{\IEEEauthorblockN{ Jingbo Liu}
\IEEEauthorblockA{\textit{Department of Statistics and Department of Electrical and Computer Engineering} \\
\textit{University of Illinois, Urbana-Champaign}\\
Urbana-Champaign, USA \\
jingbol@illinois.edu}
}

\maketitle

\begin{abstract}
We show that the maximum expected inner product between a random vector and the standard normal vector over all couplings subject to a mutual information constraint or regularization is equivalent to a truncated integral involving the rate-distortion function, up to universal multiplicative constants.
The proof is based on a lifting technique, which constructs a Gaussian process indexed by a random subset of the type class of the probability distribution involved in the information-theoretic inequality,
and then applying a form of the majorizing measure theorem.
\end{abstract}

\begin{IEEEkeywords}
entropic optimal transport, 
majorizing measure theorem, 
method of types.
\end{IEEEkeywords}

\section{Introduction}
Optimal transport has been a pivotal tool in many fields including mathematical analysis, machine learning, and statistics.
The case where the probability measure on one side is Gaussian is particularly interesting, with applications in generative models.
Recently, \cite{liu2025simple} established the following result termed ``rate-distortion integral'': 
Given standard Gaussian measure $\gamma$ and an arbitrary probability measure $\mu$ with finite second moments on $\mathbb{R}^n$, we have
\begin{align}
\sup_{P_{YZ}\in \Pi(\gamma,\mu)} 
\mathbb{E}[\left<Y,Z\right>]
\asymp \int_0^{\infty}\sqrt{R_{\mu}(\sigma^2)}d\sigma
\label{e_rd}
\end{align}
where $R_{\mu}(\sigma^2):=\inf_{P_{UZ}\colon P_Z=\mu, \mathbb{E}[\|Z-U\|_2^2]\le \sigma^2} I(U;Z)$ denotes the rate-distortion function, $\Pi(\gamma,\mu)$ denotes the set of couplings of $\gamma$ and $\mu$, and $\asymp$ denotes equivalence up to universal constants.
We note that \eqref{e_rd} resembles the standard Dudley integral \cite{dudley2016vn}\cite{talagrand2014upper}:
\begin{align}
\mathbb{E}[\max_{t\in T}\left<Y,t\right>]
\lesssim 
\int_0^{\infty}
\sqrt{\ln N(T,\lambda)}d\lambda
\label{e_dudley}
\end{align}
where $T$ is a compact set in $\mathbb{R}^n$ and $N(T,\lambda)$ denotes its covering number at scale $\lambda$.
However, \eqref{e_rd} is a \emph{two-sided bound}, 
whereas $\lesssim$ indicates that \eqref{e_dudley} is only an upper bound up to universal constant.
The sharpness of \eqref{e_rd} can be understood via the construction in \cite{liu2025simple}, which constructs a Gaussian process indexed by the type class of $\mu$ in a certain high-dimensional space.
Due to the permutation invariance of the type class, 
this process is \emph{stationary} \cite{van2014probability}, so that Dudley's integral is well-known to be sharp \cite{van2014probability}.
Further, \cite{liu2025simple}  showed that \eqref{e_rd} implies Talagrand's celebrated majorizing measure theorem by essentially applying the data processing inequality,
which is arguably simpler than existing proofs.
Moreover, \eqref{e_rd} has other interesting consequences, 
including sharper rates in statistical regression \cite{liu2025simple} and sub-Gaussian comparison theorems 
\cite{liu2025simple}\cite{van2025subgaussian}.

In this paper, we establish 
two-sided bounds on the entropic constrained/regularized optimal transport, analogous to \eqref{e_rd} but with a constraint/regularization of mutual information.
It turns out that a two-sided inequality still holds, provided that the integral on the right side is truncated (see Section~\ref{sec_main} for precise statements of the result).
The method is similar to the lifting in \cite{liu2025simple}, but with an important refinement that replaces the type class with a randomly selected subset, 
to avoid ``overfitting'' and hence control the mutual information.
While such a random subset is no longer permutation invariant, 
the resulting process is still close to stationary, since the numbers of randomly selected points in each ball approximately the same due to a sharp double-exponential concentration (truncation of the integral precisely restricts to the regime where such concentration holds).

Regularization plays a prominent role
in designing efficient algorithms with provable convergence (see \cite{peyre2019computational} and many references therein),
and entropic regularization is the most popular choice which is used in the Sinkhorn's algorithm.
It is thus interesting to explore potential applications of the new inequalities in machine learning and optimization.
Furthermore, since our Theorem~\ref{thm2} offers a clean, ``incremental form'' of the rate-distortion integral satisfying exact tensorization, 
a tantalizing question is whether there is purely analytic proof of the majorizing measure theorem even simpler than the lifting approach in \cite{liu2025simple}.

\emph{Related work.} The lifting approach in \cite{liu2025simple} was inspired by methods used for Cover's problem \cite{wu2018capacity}.
In \cite{bai2023information}, it was shown that the Gaussian-specific bound in \cite{wu2018capacity} can be recovered using optimal transport methods without lifting.
In \cite{liu2020minoration}, another method based on convex geometry solved
Cover's problem in its original form concerning general discrete memoryless channels,
and \cite{liu2023soft} applied the lifting argument to the method of \cite{liu2020minoration}, extracting general bounds on entropic optimal transport that reduces to the one in \cite{bai2023information} in the Gaussian case.
These bounds are dimension-dependent, differing from the present paper.
More recently, \cite{chu2025talagrand} obtained upper and lower bounds on the related free-energy functional in the case of a finite index set.
Their bounds are one-sided in the general case.

\section{Preliminaries}
\emph{Notation.}
 $\wedge$ and $\vee$ to denote the min and max of two numbers.
Given probability measures $P$ and $Q$ on a metric space $(T,d)$, 
$W_p(P,Q):=\inf_{P_{XY}\in\Pi(P,Q)}\mathbb{E}^{\frac1{p}}[d(X,Y)^p]$ denotes the Wasserstein distance, where $\Pi(P,Q)$ denotes the set of couplings.
Product measures are denoted by $P\times Q$ or $P^{\otimes N}$.
$\lesssim$ means less than or equal to up to a universal constant,
and $\asymp$ means both $\lesssim$ and $\gtrsim$.
The empirical distribution of a sequence $x=(x(1),x(2),\dots,x(d))$ is denoted by $\widehat{P}_x$.
The cardinality of a set $T$ is denoted $|T|$.
Information-theoretic notation mostly follows \cite{thomas2006elements},
and logarithms and exponentials have the natural base by default.

\subsection{Information-Constrained/Regularized Optimal Transport}
Given probability measures $\gamma$ and $\mu$ on $\mathbb{R}^n$, and $R\ge 0$, define 
\begin{align}
{\sf w}(\gamma,\mu,R):=
\sup_{P_{YZ}\in \Pi(\gamma,\mu)\colon I(Y;Z)\le R} 
\mathbb{E}[\left<Y,Z\right>],
\end{align}
which we abbreviate as ${\sf w}(R)$ if no confusion.
It is related to the information-constrained optimal transport distance since $2{\sf w}(\gamma,\mu,R)=
\mathbb{E}[\|Y\|_2^2]+\mathbb{E}[\|Z\|_2^2]
-\inf_{P_{YZ}\in \Pi(\gamma,\mu)\colon I(Y;Z)\le R} 
\mathbb{E}[\|Y-Z\|_2^2]$.
Besides the constrained version, we can also define a regularized version:
For any $\beta\ge 0$, define 
\begin{align}
{\sf f}(\gamma,\mu,\beta):=\sup_{R>0}\{{\sf w}(\gamma,\mu, R)
-\beta R\}.
\end{align}
Again we may use the abbreviation ${\sf f}(\beta)$ if no confusion. We note that ${\sf f}(\beta)\ge 0$.

\subsection{Upper Bounds on Stochastic Processes}
Following \cite{talagrand2014upper},
a general stochastic process on a metric space $T$ is denoted by $(X_t)_{t\in T}$,
and a central object is $\mathbb{E}[\sup_{t\in T}X_t]$.
To avoid measurability issues in defining the supremum, \cite{talagrand2014upper} actually defined it through the \emph{lattice supremum}, i.e., $\sup_{T'\subseteq T;\, |T'|<\infty}\mathbb{E}[\max_{t\in T'}X_t]$, 
and noted that for the purpose of deriving concrete bounds there is no loss of generality in focusing on finite $T$.
If $|T|<\infty$, we can always represent $(X_t)_{t\in T}$ in an inner product form:
for example, set $T=\{e_1,\dots,e_n\}$ as the standard basis vectors, so that $X_t=\left<Y,t\right>$,
where $Y=(X_{e_1},\dots, X_{e_n})$ is a random vector.

Upper bounds on $\mathbb{E}[\sup_{t\in T}X_t]$ based on chaining-type arguments usually only require controls on the tail probabilities, such as sub-Gaussianity.
Lower bounds, such as the one in the majorizing measure theorem, are much more delicate, usually stated for Gaussian processes.
Since the current paper focuses on two-sided bounds, we will state the results for Gaussian processes, although the upper bound direction only requires sub-Gaussianity (see \cite{liu2025simple} for the treatment in the $R=\infty$ case).
For Gaussian processes with $|T|=n<\infty$, we can use the inner product representation 
\begin{align}
X_t=\left<Y,t\right>,\quad Y\sim \mathcal{N}(0, I_n).
\end{align}
Then the metric on $T$, defined as $d(t_1,t_2):=\mathbb{E}^{1/2}[|X_{t_1}-X_{t_2}|^2]$, is equivalent to $\|t_1-t_2\|_2$ on $\mathbb{R}^n$.

A Gaussian process is called \emph{stationary} if the metric on the index set is invariant under the action of a transitive group (see e.g.~\cite{van2014probability,liu2025simple}).
Under the stationarity assumption, the chaining proof of Dudley's integral \eqref{e_dudley} can be reversed, 
and it is known since Fernique \cite{fernique1975regularite} that Dudley's integral is sharp in the stationary case.

For general nonstationary Gaussian processes, 
Fernique \cite{fernique1975regularite} conjectured a form of the majorizing measure theorem,
whose nontrivial lower bound part was proved more than a decade later by Talagrand \cite{talagrand1987regularity} (see also discussions in \cite{maurey}).
Given a probability distribution $\mu$ on $T$, define 
\begin{align}
I_{\mu}(t):=\int_0^{\infty}\sqrt{\ln\frac1{\mu(B(t,\lambda))}}d\lambda,
\label{e_imu}
\end{align}
where $B(t,\lambda)$ denotes the ball centered at $t$ with radius $\lambda$, under the metric $d$. 
Talagrand's celebrated majorizing measure theorem states that $\mathbb{E}[\sup_{t\in T}X_t]$ equals, up to universal multiplicative factors,
\begin{align}
\gamma_2(T):=\inf_{\mu}\sup_{t\in T}I_{\mu}(t).
\label{e2}
\end{align}
It is also known (see \cite{talagrand2014upper}) that 
the following quantity is equivalent to $\gamma_2(T)$ up to universal constants:
\begin{align}
\delta_2(T)&:=
\sup_{\mu}\inf_{t\in T}I_{\mu}(t).
\label{e1.4}
\end{align}
While Talagrand's proofs \cite{talagrand1987regularity,talagrand1992simple} are nontrivial, the lifting method shows that the majorizing measure theorem and its equivalent forms can be derived from  Fernique's earlier results for stationary processes \cite{liu2025simple}.

\subsection{Method of Types and Rate-Distortion Theory}
The lifting argument relies on the \emph{method of types}, a standard method in information theory \cite{CsiszarKorner1981} and large deviation analysis \cite{dembo2009large}.
The \emph{type} of a sequence is defined as its empirical distribution. 
Since not all distributions can be a type for a given length of the sequence, we need the following: 
\begin{definition}\label{def2}
Suppose that $\mu$ is a distribution on a finite set $\mathcal{Z}$. We say $\mu$ is \emph{rational} if there exists integer $N>0$ such that $N\mu$ is \emph{integer}, i.e., $N\mu(z)\in\mathbb{Z}$ for any $z\in \mathcal{Z}$. 
\end{definition}
If $N\nu$ is integer, the \emph{type class of $\nu$} is defined as the set of sequences with type $\nu$.
The following result can be found in
\cite[Chapter~2, Problem~3]{CsiszarKorner1981}:

\begin{lemma}\label{lem_t2}
Suppose that $\mathcal{X}$ and $\mathcal{Y}$ are finite sets, 
$P_{XY}$ is a distribution on $\mathcal{X}\times \mathcal{Y}$, 
and $NP_{XY}$ is integer.
Let $x^N$ be a sequence of type $P_X$,
and let $\mathcal{C}\subseteq \mathcal{Y}^N$ be the type class of $P_Y$.
Define
\begin{align}
\mathcal{C}(x^N):=\{y^N\colon (x^N,y^N) \textrm{ has type $P_{XY}$}\}.
\end{align}
Then we have
\begin{align}
&\quad (N+1)^{-|\mathcal{X}||\mathcal{Y}|}
\exp(-NI(X;Y))
\nonumber\\
&\le 
\frac{|\mathcal{C}(x^N)|}{|\mathcal{C}|}
\le 
(N+1)^{|\mathcal{Y}|}\exp(-NI(X;Y)).
\end{align}
\end{lemma}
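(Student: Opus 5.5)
The plan is to reduce everything to counting type classes with the standard method-of-types estimates. We write the ratio as
\[
\frac{|\mathcal{C}(x^N)|}{|\mathcal{C}|}
=\frac{|T_{Y|X}(x^N)|}{|T_{P_Y}|},
\]
where $T_{Y|X}(x^N)$ is the conditional type class, i.e.\ the set $\mathcal{C}(x^N)$.

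\textbf{Numerator.} Since $(x^N,y^N)$ has type $P_{XY}$, we can group the coordinates by the value of $x$. The set $\mathcal{C}(x^N)$ is then a product over $a\in\mathcal{X}$ of the type classes of $P_{Y|X=a}$, each on a block of length $NP_X(a)$. Its cardinality is therefore
\[
|\mathcal{C}(x^N)|=\prod_{a\in\mathcal{X}}\binom{NP_X(a)}{(NP_{XY}(a,b))_{b\in\mathcal{Y}}},
\]
which is exact.

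\textbf{Bounds on a single type class.} We use the standard fact that a type class of type $Q$ on an alphabet $\mathcal{Y}$, of length $m$, has cardinality between $(m+1)^{-|\mathcal{Y}|}e^{mH(Q)}$ and $e^{mH(Q)}$. The upper bound follows because $Q^{\otimes m}$ assigns mass $e^{-mH(Q)}$ to each sequence of the class and total mass at most $1$. The lower bound follows because the class of $Q$ has maximal $Q^{\otimes m}$-probability among all types, and there are at most $(m+1)^{|\mathcal{Y}|}$ types.

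\textbf{Assembling the upper bound.} Apply the upper bound $e^{NP_X(a)H(P_{Y|X=a})}$ to each block and take the product. This gives $|\mathcal{C}(x^N)|\le e^{NH(Y|X)}$. The lower bound gives $|\mathcal{C}|\ge (N+1)^{-|\mathcal{Y}|}e^{NH(Y)}$. Dividing,
\[
\frac{|\mathcal{C}(x^N)|}{|\mathcal{C}|}\le (N+1)^{|\mathcal{Y}|}e^{-N(H(Y)-H(Y|X))}=(N+1)^{|\mathcal{Y}|}e^{-NI(X;Y)}.
\]

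\textbf{Assembling the lower bound.} Use $|\mathcal{C}|\le e^{NH(Y)}$ for the denominator. For the numerator, apply the lower bound blockwise with $m=NP_X(a)\le N$, giving
\[
|\mathcal{C}(x^N)|\ge \prod_{a}(NP_X(a)+1)^{-|\mathcal{Y}|}e^{NP_X(a)H(P_{Y|X=a})}\ge (N+1)^{-|\mathcal{X}||\mathcal{Y}|}e^{NH(Y|X)}.
\]
Blocks with $P_X(a)=0$ contribute a factor $1$. This yields the stated lower bound.

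\textbf{Main obstacle.} The only step that needs care is the lower bound on the size of a type class. It requires the lemma that the type class of $Q$ maximizes $Q^{\otimes m}(T_{Q'})$ over all types $Q'$. I would prove this by the ratio of multinomial coefficients, using $\frac{k!}{l!}\ge l^{k-l}$, to show $Q^{\otimes m}(T_Q)/Q^{\otimes m}(T_{Q'})\ge 1$. Everything else is bookkeeping of the polynomial factors.
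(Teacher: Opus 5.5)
Your proposal is correct. It is the standard method-of-types argument behind the result the paper cites from Csisz\'ar--K\"orner: the bound $(N+1)^{-|\mathcal{X}||\mathcal{Y}|}e^{NH(Y|X)}\le|\mathcal{C}(x^N)|\le e^{NH(Y|X)}$ on the conditional type class, divided by $(N+1)^{-|\mathcal{Y}|}e^{NH(Y)}\le|\mathcal{C}|\le e^{NH(Y)}$ for the type class, with your bookkeeping of the polynomial factors and empty blocks handled correctly (the paper itself gives no proof beyond the citation).
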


Next, we recall the rate-distortion function in information theory.
Given a probability measure $\mu$ on $T$ and $\sigma>0$, define 
\begin{align}
\Pi_{\sigma}(\mu)
:=\{P_{Z\hat{Z}}\colon P_{Z}=P_{\hat{Z}}=\mu;\,
\mathbb{E}[d^2(Z,\hat{Z})]\le \sigma^2
\}.
\end{align} 
\begin{definition}
\label{def_rd}
Given $\mu$ on a metric space $(T,d)$, define the rate-distortion function 
\begin{align}
R_{\mu}(\sigma^2):=\inf_{P_{UZ}}I(U;Z),
\end{align}
where the infimum is over probability measure $P_{UZ}$ on $T\times T$ satisfying $P_Z=\mu$ and $\mathbb{E}[d^2(U,Z)]\le \sigma^2$.
Define 
\begin{align}
r_{\mu}(\sigma):=R_{\mu}(\sigma^2),
\end{align}
and 
\begin{align}
i_{\mu}(\sigma):=\inf_{
\substack{P_{Z\hat{Z}}\in\Pi_{\sigma}(\mu)}
}
I(Z;\hat{Z}).
\end{align}
\end{definition}
Observe that
\begin{align}
r_{\mu}(\sigma)
\le 
i_{\mu}(\sigma)
\le 
r_{\mu}(\sigma/2).
\label{e21}
\end{align}
Indeed, the first inequality is obvious from the definition (relaxing the $P_U=\mu$ constraint in the infimum). 
The second holds since given $\mathbb{E}[d^2(U,Z)]\le \sigma^2/4$, we can construct $\hat{Z}$ such that $\hat{Z}$ and $Z$ are conditionally i.i.d.\ given $U$. Then $\mathbb{E}[d^2(\hat{Z},Z)]
\le \mathbb{E}[(d(\hat{Z},U)+d(U,Z))^2]
\le \sigma^2$ and $I(Z;\hat{Z})\le I(U;Z)$ by the data-processing inequality.
Due to \eqref{e21}, all our results can be stated equivalently in terms of either $i_{\mu}$, $r_{\mu}$, or $R_{\mu}$ (up to universal multiplicative constants).

\section{Main Results}\label{sec_main}
\begin{theorem}\label{thm1}
Suppose that $\gamma=\mathcal{N}(0,I_n)$, and $\mu$ is a probability measure on $\mathbb{R}^n$ with finite second moments.
Then we have 
\begin{align}
{\sf w}(\gamma,\mu, R)
\ge 
c\int_0^{\infty} \sqrt{R\wedge i_{\mu}(\sigma)}d \sigma
\label{e16}
\end{align}
where $c>0$ is any constant such that $\mathbb{E}[\max_{t\in T}X_t]\ge c\delta_2(T)$ for all Gaussian processes $(X_t)_{t\in T}$ over a finite metric space $T$ (see \eqref{e1.4}).
\end{theorem}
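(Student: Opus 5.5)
We will follow the lifting strategy of \cite{liu2025simple}, with the type class replaced by a random subset of it. First we reduce to $\mu$ finitely supported and rational with $N\mu$ integer; the general case follows by approximation. For this we use lower semicontinuity of ${\sf w}$ under weak convergence with second moments, and continuity of $i_\mu$ in the truncated integral. Fix a large block length $N$. Let $\mathcal{C}\subseteq T^N$, with $T=\mathrm{supp}(\mu)\subseteq\mathbb{R}^n$, be the type class of $\mu$, viewed as a subset of $\mathbb{R}^{nN}$ with the normalized metric $\|x-x'\|_2/\sqrt{N}$, i.e.\ $\big(\frac1N\sum_i\|x(i)-x'(i)\|^2\big)^{1/2}$. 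Draw a random subset $\mathcal{S}\subseteq\mathcal{C}$ of $M=\lceil e^{NR}\rceil$ i.i.d.\ uniform points. Consider the Gaussian process $X_x=\frac1{\sqrt N}\langle Y^N,x\rangle$ for $x\in\mathcal{S}$, where $Y^N\sim\gamma^{\otimes N}$.

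\emph{Upper bound on the lifted process by ${\sf w}(R)$.} Let $\hat{x}=\arg\max_{x\in\mathcal{S}}\langle Y^N,x\rangle$. Then $\frac1N\mathbb{E}\langle Y^N,\hat x\rangle=\mathbb{E}\langle Y_J,\hat x(J)\rangle$ with $J$ a uniform coordinate independent of everything. The pair $(Y_J,\hat x(J))$ has marginals $\gamma$ and $\mu$, the latter because $\hat x$ lies in the type class. Its mutual information satisfies
\[
I(Y_J;\hat x(J))\le I(Y_J;\hat x(J),J)\le \tfrac1N I(Y^N;\hat x)\le \tfrac1N\ln M\approx R,
\]
using independence of the $Y_i$ and the chain rule as in the standard converse. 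Hence $\mathbb{E}\max_{x\in\mathcal S}X_x\le\sqrt N\,{\sf w}(R+o(1))$. Continuity of ${\sf w}$ in $R$, which follows from concavity, lets us remove the $o(1)$.

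\emph{Lower bound on the lifted process via $\delta_2$.} Apply the hypothesis with the uniform measure $\nu$ on $\mathcal S$: $\mathbb{E}\max X\ge c\inf_{x\in\mathcal S}\int_0^\infty\sqrt{\ln(1/\nu(B(x,\lambda)))}\,d\lambda$. The metric here is $\sqrt N$ times the normalized metric, so substituting $\lambda=\sqrt N\sigma$ gives a factor $\sqrt N$. It therefore suffices to show that, with positive probability over $\mathcal S$, for every $x\in\mathcal S$ and every $\sigma$ we have
\[
\nu(B(x,\sigma))\le \exp\big(-N(R\wedge i_\mu(\sigma))+o(N)\big).
\]
The count of points of $\mathcal S\setminus\{x\}$ within normalized distance $\sigma$ of $x$ is Binomial$(M,p)$, where $p=|\mathcal C\cap B(x,\sigma)|/|\mathcal C|$. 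By Lemma~\ref{lem_t2}, summing over the polynomially many joint types $P_{Z\hat Z}\in\Pi_\sigma(\mu)$, we get $p\le \mathrm{poly}(N)\,e^{-Ni_\mu(\sigma)}$.

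Two cases follow. If $i_\mu(\sigma)\ge R$, the expected count is $e^{-o(N)}$ or smaller. A union bound over $x$ and over the finitely many relevant distance values (the distances realized by types) shows the count is at most $e^{o(N)}$, so $\nu(B)\le e^{o(N)}/M$. If $i_\mu(\sigma)<R$, the expected count is $Me^{-Ni_\mu}\cdot\mathrm{poly}$. When this is exponentially large, the Chernoff bound gives double-exponential concentration below twice the mean, which survives the union bound over $M$ centers and the polynomially many $\sigma$-levels. In both cases $\nu(B)\le e^{-N(R\wedge i_\mu(\sigma))+o(N)}$.

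The main obstacle is this uniform concentration near the threshold $i_\mu(\sigma)\approx R$, together with making the $o(N)$ slack uniform in $\sigma$. We handle it by perturbing $R$ slightly: use $M=e^{N(R+\epsilon)}$ in the concentration step and $R-\epsilon$ in the estimates. The $\sigma$-integral is over a bounded range, since the type class has bounded diameter, so the $o(N)$ errors integrate to $o(\sqrt N)$ after taking square roots. Combining the two bounds, dividing by $\sqrt N$, and letting $N\to\infty$ then $\epsilon\to0$ yields \eqref{e16}.
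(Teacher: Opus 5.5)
Your proposal is correct, and its second half is essentially the paper's argument: you apply the $\delta_2$ hypothesis with the uniform measure on a random subset of the type class, and control ball masses by Lemma~\ref{lem_t2} plus double-exponential Chernoff concentration. The genuine difference is in how you bound the lifted process by ${\sf w}(R)$.

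The paper proves Lemma~\ref{lem1}, an exact asymptotic identity for ${\sf w}(\bar\gamma,\mu,R)$ through the random codebook, by counting joint types. Because that requires $\bar\gamma$ rational, the paper must approximate $\mathcal{N}(0,I_n)$ by $\bar\gamma$ and pay $\epsilon\sqrt{\mathbb{E}\|Z\|_2^2}$ through a Markov chain $Y-\bar Y-Z$. It then couples the uniform-on-type-class $\bar Y^N$ to $Y^N\sim\gamma^{\otimes N}$ via \cite[Lemma~3]{liu2025simple}. You replace all of this with the classical single-letter converse, using a time-sharing index $J$. This works directly for Gaussian $Y^N$ and needs neither discretization of $\gamma$ nor the coupling lemma. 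Moreover, $I(Y^N;\hat x)\le\ln M$ holds for \emph{every} fixed $\mathcal S$ of size $M$ in the type class. So randomness is only needed to exhibit one well-spread codebook, and ``positive probability'' of the good event genuinely suffices; the paper instead averages over $\mathcal A$.

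You also use only the one-sided bound $p\le\mathrm{poly}(N)e^{-Ni_\mu(\sigma)}$. Hence Lemma~\ref{lem:cycle-rounding} is not needed in the concentration step; the paper's two-sided estimate $|\ln\mu_N(B)+Ni_\mu(\sigma)|\le 2$ is more than required. You get uniformity in $\sigma$ from the polynomially many distance levels rather than from Fubini--Fatou. What the paper's route buys is the full two-sided identity of Lemma~\ref{lem1} for general rational $\gamma$, i.e., that random-codebook lifting is asymptotically exact for ${\sf w}$. That is of independent interest but not needed for Theorem~\ref{thm1}.

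Three small slips, none affecting validity:
\begin{enumerate}
\item In the reduction to rational $\mu$ you need ${\sf w}(\gamma,\mu,R)\ge\limsup_k{\sf w}(\gamma,\mu_k,R)$, which is upper, not lower, semicontinuity. This holds because ${\sf w}(\gamma,\cdot,R)$ is $\sqrt n$-Lipschitz in $W_2$: compose a near-optimal coupling with a $W_2$-optimal coupling as a Markov chain and use data processing.
\item With $X_x=\frac1{\sqrt N}\langle Y^N,x\rangle$ the canonical metric is already the normalized one. The factor $\sqrt N$ therefore comes from $\sqrt{N(R\wedge i_\mu(\sigma))}$ in the integrand, not from rescaling $\lambda$.
\item When $i_\mu(\sigma)\ge R$, the mean count is at most $\mathrm{poly}(N)=e^{o(N)}$, not $e^{-o(N)}$.
\end{enumerate}
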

In \cite[Theorem~2]{liu2025simple}, it is also shown that
\begin{align}
{\sf w}(\gamma,\mu, R)
\le 
2K\int_0^{\infty} \sqrt{R\wedge i_{\mu}(\sigma)}d \sigma,
\label{e15}
\end{align}
where $K>0$ is any constant for which Dudley's integral holds. 
To be precise, \cite[Theorem~2]{liu2025simple} is stated for a countable metric space $T$,
but the extension to $T=\mathbb{R}^n$ is immediate from a limiting argument, since both sides of \eqref{e15} are continuous in $\mu$ with respect to the 2-Wasserstein metric.
Altogether, we conclude that 
\begin{align}
{\sf w}(\gamma,\mu, R)
\asymp
\int_0^{\infty} \sqrt{R\wedge i_{\mu}(\sigma)}d \sigma.
\end{align}

\begin{theorem}\label{thm2}
Suppose that $\gamma$, $\mu$, $c$, $K$ are the same as in Theorem~\ref{thm1} and \eqref{e15}. 
Define $\phi(\mu,\alpha):=\inf_{\sigma>0}\{i_{\mu}(\sigma^2)+\frac{\sigma^2}{\alpha^2}\}$, which is a decreasing function of $\alpha>0$.
We have 
\begin{align}
{\sf f}(\gamma,\mu,\beta)&\ge \frac{c}{4}\int_{\frac{4\beta}{c}}^{\infty}\phi(\mu,\alpha)d\alpha;
\\
{\sf f}(\gamma,\mu,\beta)&\le K\int_{\frac{\beta}{K}}^{\infty}\phi(\mu,\alpha)d\alpha.
\end{align}
\end{theorem}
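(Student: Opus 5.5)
The plan is to derive Theorem~\ref{thm2} from the two-sided constrained bounds, Theorem~\ref{thm1} and \eqref{e15}, by a Legendre-type transform in $R$. Write $J(R):=\int_0^\infty\sqrt{R\wedge i_\mu(\sigma)}\,d\sigma$. Then
\[
{\sf f}(\beta)=\sup_R\{{\sf w}(R)-\beta R\}
\]
is sandwiched between $\sup_R\{cJ(R)-\beta R\}=c\,G(\beta/c)$ and $\sup_R\{2KJ(R)-\beta R\}=2K\,G(\beta/(2K))$, where
\[
G(b):=\sup_{R\ge 0}\{J(R)-bR\}.
\]
So it suffices to prove a deterministic two-sided comparison between $G(b)$ and $\int_{\kappa b}^\infty\phi(\mu,\alpha)\,d\alpha$ for explicit constants $\kappa$. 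In what follows I read $\phi(\alpha)$ as $\inf_\sigma\{i(\sigma)+\sigma^2/\alpha^2\}$ with $i:=i_\mu$ nonincreasing in the scale; only this monotonicity is used.

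\textbf{Step 1 (layer cake).} Write $\sqrt{R\wedge i(\sigma)}=\int_0^{\sqrt R}1\{i(\sigma)>s^2\}\,ds$ and use Fubini. This gives
\[
J(R)=\int_0^{\sqrt R}\sigma^*(s)\,ds,\qquad \sigma^*(s):=|\{\sigma: i(\sigma)>s^2\}|,
\]
where $\sigma^*$ is nonincreasing, essentially the inverse of $\sqrt{i}$. Substituting $u=\sqrt R$,
\[
G(b)=\sup_{u\ge0}\int_0^u(\sigma^*(s)-2bs)\,ds.
\]
The integrand is decreasing in $s$, so the supremum is attained at the crossing point $u^*(b):=\sup\{s:\sigma^*(s)>2bs\}$.

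\textbf{Step 2 (envelope formula).} The function $G$ is a supremum of affine functions of $b$, hence convex. It is a.e.\ differentiable with $G'(b)=-u^*(b)^2$, and $G(b)\to0$ as $b\to\infty$ since $J(R)/R\to0$. Therefore
\[
G(b)=\int_b^\infty u^*(t)^2\,dt .
\]

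\textbf{Step 3 (compare $u^*(t)^2$ with $\phi$).} I will show that $u^*(t)^2$ and $\phi(2t)$ agree up to a factor of at most $2$, possibly after a constant rescaling of $t$.
\begin{itemize}
\item \emph{Upper bound on $\phi$.} At $\sigma=2tu^*$ we have $i(\sigma)\le u^{*2}$, by the definition of $u^*$ and monotonicity of $i$. Hence $\phi(2t)\le i(2tu^*)+(2tu^*)^2/(2t)^2\le 2u^{*2}$.
\item \emph{Lower bound on $\phi$.} For any $\sigma$: if $\sigma<2tu^*$ then $\sigma<\sigma^*(s)$ for $s$ slightly below $u^*$, so $i(\sigma)\ge s^2$. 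If instead $\sigma\ge 2tu^*$ then $\sigma^2/(2t)^2\ge u^{*2}$. Either way $\phi(2t)\ge u^{*2}$.
\end{itemize}

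Integrating Step 3 against Step 2 and changing variables $\alpha=2t$ gives $G(b)\asymp\int_{2b}^\infty\phi(\alpha)\,d\alpha$ with factors $1/2$ and $1$. Plugging in $b=\beta/c$ and $b=\beta/(2K)$ and tracking constants should produce the stated thresholds $4\beta/c$ and $\beta/K$ and prefactors $c/4$ and $K$. A small slack is available by using $\phi(\alpha)\le\phi(\alpha')$ for $\alpha\ge\alpha'$ to adjust the lower limit.

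\textbf{Main obstacle.} The main difficulty is the rigor of Steps 2--3 for a general, possibly discontinuous, $i_\mu$. Two points need care:
\begin{itemize}
\item the boundary cases at the crossing point, where $\sigma^*$ may jump;
\item justifying $G(\infty)=0$ and absolute continuity of $G$.
\end{itemize}
Here I would rely on convexity of $G$ to get absolute continuity on $(b,\infty)$. I would also argue directly via the inequalities $G(b)\ge\int_0^{u^*}(\sigma^*-2bs)\,ds$ and the representation $\int_0^{u}(\sigma^*(s)-2bs)\,ds=\int_b^\infty(\dots)$, to avoid differentiability issues entirely.
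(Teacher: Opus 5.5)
Your argument is correct, and it takes a different route from the paper's. The paper compares first and transforms second. For each fixed $R$ it applies Lemma~\ref{lem_equi} with $x=\sigma$, $y=\sqrt{R\wedge i_\mu(\sigma)}$, together with the identity $\inf_{\sigma}[i_\mu(\sigma)\wedge R+\sigma^2/\alpha^2]=\phi(\alpha)\wedge R$. This gives $\tfrac{1}{4}F(R)\le J(R)\le\tfrac{1}{2}F(R)$ with $F(R):=\int_0^\infty\phi(\alpha)\wedge R\,d\alpha$. It then evaluates the Legendre transform of $F$ exactly, $\sup_R\{aF(R)-\beta R\}=a\int_{\beta/a}^\infty\phi(\alpha)\,d\alpha$, using $F'(R)=\phi^{-1}(R)$.

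You transform first and compare second. You compute the Legendre transform $G$ of $J$ exactly (layer cake plus the envelope identity $G(b)=\int_b^\infty u^*(t)^2\,dt$), and only then compare $-G'(t)=u^*(t)^2$ pointwise with $\phi(2t)$. Your Step~3 is the same crossing-point observation that drives Lemma~\ref{lem_equi}: the minimum of $\alpha^{-2}x^2+y^2$ is within a factor $2$ of its value where $x/y=\alpha$, here with $\alpha=2t$. So the core idea is shared.

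What each route buys:
\begin{itemize}
\item The paper's route is shorter, reuses a lemma from \cite{liu2025simple}, and needs no convex-analysis bookkeeping.
\item Yours is self-contained and uses only monotonicity of $i_\mu$, with no need for the parametrization $x(\alpha)$ in Lemma~\ref{lem_equi}.
\item In your route the factor-$2$ loss falls on the amplitude of $-G'$ rather than on $J$. As a result it gives a slightly sharper lower-bound threshold, $2\beta/c$ instead of $4\beta/c$.
\end{itemize}
For $G(b)\to 0$, the ratio $J(R)/R\to 0$ alone is not quite enough. The cleanest justification is $J(R)\le\sigma_0\sqrt{R}$ with $\sigma_0^2:=2\,\mathbb{E}\|Z-\mathbb{E}Z\|_2^2$: the independent coupling gives $i_\mu(\sigma)=0$ for $\sigma\ge\sigma_0$, hence $0\le G(b)\le\sigma_0^2/(4b)$.

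There is one arithmetic slip in your last step. The substitution $\alpha=2t$ carries the Jacobian $dt=d\alpha/2$, so Steps~2--3 actually give $\tfrac{1}{4}\int_{2b}^\infty\phi(\alpha)\,d\alpha\le G(b)\le\tfrac{1}{2}\int_{2b}^\infty\phi(\alpha)\,d\alpha$, not factors $1/2$ and $1$. With the correct factors:
\begin{itemize}
\item ${\sf f}(\beta)\ge cG(\beta/c)\ge\tfrac{c}{4}\int_{2\beta/c}^\infty\phi\ge\tfrac{c}{4}\int_{4\beta/c}^\infty\phi$;
\item ${\sf f}(\beta)\le 2K\,G(\beta/(2K))\le K\int_{\beta/K}^\infty\phi$.
\end{itemize}
These are exactly the stated bounds. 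With the factors as you wrote them, the upper bound would have come out as $2K\int_{\beta/K}^\infty\phi$, off by a factor of $2$.
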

We will see through the proof 
that
Theorem~\ref{thm1} and Theorem~\ref{thm2} are equivalent statements up to constant factors. 
However, Theorem~\ref{thm2} further enjoys the exact tensorization property (i.e., additivity under product measures): 
for standard Gaussian measures $\gamma_1$ and $\gamma_2$ and arbitrary $\mu_1$ and $\mu_2$ with bounded second moments and compatible dimensions, we have 
${\sf f}(\gamma_1\times \gamma_2,\mu_1\times \mu_2,\beta)
=
{\sf f}(\gamma_1,\mu_1,\beta)+{\sf f}(\gamma_2,\mu_2,\beta)$, 
and $\phi(\mu_1\times \mu_2,\alpha)=\phi(\mu_1,\alpha)+\phi(\mu_2,\alpha)$.

As an example, suppose that $\mu=\gamma=\mathcal{N}(0,1)$.
Then we can verify that both ${\sf f}(\gamma,\mu,\beta)$ and $ \int_{\beta}^{\infty}\phi(\mu,\alpha)d\alpha$ converge to absolute constants as $\beta\downarrow 0$, and both are order $\frac1{\beta}$ as $\beta\to\infty$.
This verifies the bounds in Theorem~\ref{thm2} up to constants.
By rescaling and change of variables, 
the statement for $\mu=\mathcal{N}(0,\sigma^2)$ is verified for general $\sigma>0$.
In turn, by tensorization, the statement is verified for $\mu=\mathcal{N}(0,\Sigma)$ for general positive semidefinite $\Sigma$.

\section{Proof of Main Results}\label{sec_proofs}

\subsection{Proof of Theorem~\ref{thm1}}
We first establish a general result relating information-constrained optimal transport (not necessarily with a Gaussian measure) to the supremum of a process over a random subset of the type class:

\begin{lemma}\label{lem1}
Suppose that $\gamma$ and $\mu$ are rational distributions on $\mathbb{R}^n$.
Define 
\begin{align}
\mathcal{I}:=\{N\in\mathbb{Z}\colon \textrm{$N>0$, $N\gamma$  and $N\mu$ are both integer}\}.
\end{align}
For each $N\in\mathcal{I}$, let $\mathcal{C}_{\mu}$ denote the type class of $\mu$, consisting of all sequences of type $\mu=P_Z$.
Select 
\begin{align}
L:=\lfloor\exp(NR)\rfloor
\end{align} 
sequences from $\mathcal{C}_{\mu}$ uniformly at random with replacement, and call this random set $\mathcal{A}$. 
Let $Y^N$ be uniform on the type class of $\gamma$ and independent of the choice of $\mathcal{A}$.
Then 
\begin{align}
{\sf w}(\gamma,\mu, R)
=
\liminf_{\mathcal{I}\ni N\to\infty}\mathbb{E}\left[\max_{z^N\in\mathcal{A}}\frac1{N}\left<Y^N,z^N\right>\right].
\label{e_66}
\end{align}
\end{lemma}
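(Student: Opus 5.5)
We prove the two inequalities $\ge$ and $\le$ separately. Throughout, $\gamma$ and $\mu$ are rational and hence finitely supported, say on $\mathcal{Y}$ and $\mathcal{Z}$. The standard method-of-types facts (types are dense, continuity of $I$ and $\mathbb{E}\langle Y,Z\rangle$ on a finite alphabet) are available. Set $\mathbb{E}_N:=\mathbb{E}[\max_{z^N\in\mathcal{A}}\frac1N\langle Y^N,z^N\rangle]$.

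\emph{Achievability, ${\sf w}(R)\le\liminf_N\mathbb{E}_N$.} Fix $\epsilon>0$ and a coupling $P_{YZ}\in\Pi(\gamma,\mu)$ with $I(Y;Z)\le R-2\epsilon$. Such couplings approximate ${\sf w}(R)$, because the independent coupling has $I=0$ and both $I$ and the objective are concave/linear along mixtures with it. Approximate $P_{YZ}$ by a joint type $P^{(N)}_{YZ}$ with marginals exactly $\gamma,\mu$ and $I\le R-\epsilon$; this is possible for $N\in\mathcal{I}$ large by rounding entries while preserving the row and column sums. For a fixed $y^N$ in the type class of $\gamma$, each random codeword independently lands in $\mathcal{C}(y^N)$ with probability at least $(N+1)^{-|\mathcal{Y}||\mathcal{Z}|}e^{-N(R-\epsilon)}$ by Lemma~\ref{lem_t2}. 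Among $L=\lfloor e^{NR}\rfloor$ draws, the probability that none lands there is at most $\exp(-e^{N\epsilon}/\mathrm{poly}(N))\to0$. On the complementary event we get $\max\frac1N\langle y^N,z^N\rangle\ge \mathbb{E}_{P^{(N)}}\langle Y,Z\rangle$. On the bad event we use the deterministic lower bound $-\max|y|\max|z|$, which is bounded since the alphabets are finite. Taking $N\to\infty$ and then $\epsilon\to0$ gives this direction.

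\emph{Converse, $\limsup_N\mathbb{E}_N\le{\sf w}(R)$.} Let $J$ be the index of the maximizer, with ties broken uniformly, so $Z^N=z_J$ is a function of $(Y^N,\mathcal{A})$. Then
\[
\mathbb{E}_N=\tfrac1N\mathbb{E}\langle Y^N,Z^N\rangle=\mathbb{E}_{\hat P}\langle Y,Z\rangle ,
\]
where $\hat P:=\mathbb{E}[\widehat P_{Y^NZ^N}]$ is the averaged joint type. It has marginals exactly $\gamma$ and $\mu$, since every $Y^N$ and every codeword has the prescribed type. It remains to show that $I$ under $\hat P$ is at most $R+o(1)$, and this is the main obstacle: $Z^N$ depends on $Y^N$ only through a choice among $L$ codewords. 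The route I would take is to condition on $\mathcal{A}$, which is independent of $Y^N$, and bound
\[
I(Y^N;Z^N\mid\mathcal{A})\le H(J)\le \ln L\le NR .
\]
Single-letterize by $I(Y^N;Z^N\mid\mathcal A)\ge\sum_i I(Y_i;Z_i\mid\mathcal A)-\big(\sum_iH(Y_i)-H(Y^N)\big)$. The correction term is $O(|\mathcal{Y}|\ln(N+1))=o(N)$, because $Y^N$ is uniform on a type class of size $e^{NH(\gamma)-O(\log N)}$. The same bound holds for the random-index $Q$ mixture. Then use convexity of mutual information in the channel for fixed input marginal. Conditioned on $\mathcal{A}$ and on a uniform index $Q$, each $Y_Q$ is distributed as $\gamma$, so the mixture of the conditional laws $P_{Z_Q|Y_Q,\mathcal{A},Q}$ gives
\[
I_{\hat P}(Y;Z)\le \tfrac1N\sum_i I(Y_i;Z_i\mid\mathcal{A})\le R+o(1).
\]
Hence $\mathbb{E}_N\le {\sf w}(\gamma,\mu,R+o(1))$. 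Finally, ${\sf w}$ is concave and nondecreasing in $R$, and it is finite on a finite alphabet, so it is continuous in $R>0$. This gives $\limsup_N\mathbb{E}_N\le{\sf w}(R)$. The case $R=0$ is trivial, since then $L=1$ and the unique codeword is independent of $Y^N$.

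Combining the two directions, the limit exists along $\mathcal{I}$ and equals ${\sf w}(\gamma,\mu,R)$, proving \eqref{e_66}.
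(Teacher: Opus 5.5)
Your proposal is correct. The achievability half matches the paper's $r<R$ argument: the paper uses the sets $\mathcal{S}_N(r)$ and lets $r\uparrow R$, while you mix a near-optimal coupling with the independent one. One slip there: what makes the mixing step work is \emph{convexity} of $I$ along mixtures of couplings with fixed marginals, not concavity as you wrote.

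The converse takes a genuinely different route. The paper argues by counting. For $r>R$, every joint type in $\mathcal{S}_N(r)$ has $I_Q\ge r$, so Lemma~\ref{lem_t2} and a union bound over polynomially many types give
\[
\mathbb{P}\big[\max_{z^N\in\mathcal{A}}\tfrac1N\langle Y^N,z^N\rangle\ge {\sf w}(r)\big]\le e^{-Nc}.
\]
It then uses boundedness and lets $r\downarrow R$. You instead give a Fano-type information converse in three steps:
\begin{itemize}
\item $I(Y^N;Z^N\mid\mathcal{A})\le \ln L\le NR$.
\item Single-letterizing costs only $O(\log N)$, the entropy deficit of the uniform law on the type class of $\gamma$ relative to $NH(\gamma)$.
\item Time-sharing, using that $Y_Q$ is independent of $(\mathcal{A},Q)$, shows the averaged joint type $\hat P\in\Pi(\gamma,\mu)$ has $I_{\hat P}(Y;Z)\le R+O(\log N/N)$.
\end{itemize}
Hence $\mathbb{E}_N\le{\sf w}(R+o(1))$, and right-continuity of the concave function ${\sf w}$ finishes the argument.

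What your route buys: it applies verbatim to every fixed codebook of at most $e^{NR}$ sequences in $\mathcal{C}_\mu$, with no use of the randomness of $\mathcal{A}$. It needs neither Lemma~\ref{lem:cycle-rounding} nor the sets $\mathcal{S}(r)$ for this direction, and it gives an explicit $O(\log N/N)$ rate slack. What the paper's route buys: an exponential tail bound on the maximum itself, not just on its expectation. That is the same kind of concentration reasoning the paper reuses in the proof of Theorem~\ref{thm1}.
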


\begin{proof}
Let $I_{\rm max}:=\max_{P_{YZ}\in \Pi(\gamma,\mu)}I(Y;Z)$,
and suppose that $P_{YZ}^*$ achieves the maximum.
By considering taking the  convex combination with $P_{YZ}^*$, it is easy to see that ${\sf w}(\gamma,\mu,r)$ (abbreviated ${\sf w}(r)$) is a strictly increasing function for $r\in [0,I_{\max}]$.
For any $r\in [0, I_{\rm max}]$, 
define 
\begin{align}
\mathcal{S}(r)&:=
\{
Q_{YZ}\in \Pi(\gamma,\mu)\colon
\mathbb{E}_{Q_{YZ}}[\left<Y,Z\right>]\ge {\sf w}(r)
\};
\\
\mathcal{S}_N(r)&:=
\{
Q_{YZ}\in \mathcal{S}(r)\colon
\textrm{$NQ_{YZ}$ is integer}
\}.
\end{align}
Then 
\begin{align}
\min_{Q_{YZ}\in \mathcal{S}(r)}I_Q(Y;Z)=r
\end{align}
where the subscript in $I_Q$ indicates that the mutual information is computed under $Q_{YZ}$.
Lemma~\ref{lem:cycle-rounding} in turn shows that
\begin{align}
\lim_{\mathcal{I}\ni N\to\infty}\min_{Q_{YZ}\in\mathcal{S}_N(r)}
I_Q(Y;Z)=r.
\label{e26}
\end{align}
Indeed, the $\ge$ part in \eqref{e26} is obvious, and the $\le$ follows from the continuity of the mutual information (given the finite support) and approximation of an arbitrary coupling by type in Lemma~\ref{lem:cycle-rounding}.
Now 
by definition we see
\begin{align}
&\quad\mathbb{P}\left[\max_{z^N\in\mathcal{A}}\frac1{N}\left<Y^N,z^N\right>
\ge {\sf w}(r)\right]
\nonumber\\
&=
\mathbb{P}[\exists z^N\in \mathcal{A}\colon
\widehat{P}_{Y^Nz^N}\in
\mathcal{S}_N(r)
].
\label{e7}
\end{align}
When $r> R$ we have, from Lemma~\ref{lem_t2},
\begin{align}
\eqref{e7}
&\le \sum_{Q\in \mathcal{S}_N(r)}
\mathbb{P}[\exists z^N\in \mathcal{A}\colon
\widehat{P}_{Y^Nz^N}=Q
]
\\
&\le \sum_{Q\in \mathcal{S}_N(r)}
\left[1-
(1-(N+1)^{n}\exp(-NI_Q(Y;Z)))^L
\right]
\\
&\le 
|\mathcal{S}_N(r)|[1-(1-(N+1)^{n}\exp(-Nr))^L]
\\
&\le \exp(-Nc)
\end{align}
for all sufficiently large $N\in \mathcal{I}$, where  $c>0$ is some constant independent of $N$,
and we used the fact that the number of types is sub-exponential in the last step.
Then since $|\max_{z^N\in\mathcal{A}}\frac1{N}\left<y^N,z^N\right>|\le C$ for $C:=\max_{y\in\mathcal{Y},z\in\mathcal{Z}}|\left<y,z\right>|$ independent of $N$, we have
\begin{align}
&\quad \mathbb{E}\left[\max_{z^N\in\mathcal{A}}\frac1{N}\left<Y^N,z^N\right>\cdot 1\left\{\max_{z^N\in\mathcal{A}}\frac1{N}\left<Y^N,z^N\right>
\ge {\sf w}(r)\right\}
\right]
\nonumber\\
&\le 
C\sqrt{
\mathbb{P}\left[\max_{z^N\in\mathcal{A}}\frac1{N}\left<Y^N,z^N\right>
\ge {\sf w}(r)\right]
}
\end{align}
which vanishes as $\mathcal{I}\ni N\to\infty$.
This implies that 
\begin{align}
\lim_{\mathcal{I}\ni N\to\infty}
\mathbb{E}\left[\max_{z^N\in\mathcal{A}}\frac1{N}\left<y^N,z^N\right>
\right]
\le 
\lim_{r\downarrow R}{\sf w}(r)
={\sf w}(R).
\end{align}
Similarly, when $r< R$ we have 
\begin{align}
\eqref{e7}
&\ge \max_{Q\in \mathcal{S}_N(r)}
\mathbb{P}[\exists z^N\in \mathcal{A}\colon
\widehat{P}_{Y^Nz^N}=Q
]
\\
&=\max_{Q\in \mathcal{S}_N(r)}
\left[1-
(1-(N+1)^{n^2}\exp(-NI_Q(Y;Z)))^L
\right]
\\
&\ge 
1-\exp(-\exp(cN(R-r)))
\end{align}
for all sufficiently large $N\in \mathcal{I}$, where  $c>0$ is some constant independent of $N$.
Then \begin{align}
&\quad \mathbb{E}\left[\max_{z^N\in\mathcal{A}}\frac1{N}\left<Y^N,z^N\right>\cdot 1\left\{\max_{z^N\in\mathcal{A}}\frac1{N}\left<Y^N,z^N\right>
< {\sf w}(r)\right\}
\right]
\nonumber\\
&\ge 
-C\sqrt{
\mathbb{P}\left[\max_{z^N\in\mathcal{A}}\frac1{N}\left<Y^N,z^N\right>
< {\sf w}(r)\right]
}
\end{align}
which vanishes as $\mathcal{I}\ni N\to\infty$.
This implies that 
\begin{align}
\lim_{\mathcal{I}\ni N\to\infty}
\mathbb{E}\left[\max_{z^N\in\mathcal{A}}\frac1{N}\left<y^N,z^N\right>
\right]
\ge 
\lim_{r\uparrow R}{\sf w}(r)
={\sf w}(R).
\end{align}
\end{proof}

\begin{proof}[Proof of Theorem~\ref{thm1}]
First, we observe that it suffices to prove the theorem for the case of rational $\mu$.
This is because when $\mu$ has finite second moments, it can be approximated arbitrarily well in the 2-Wasserstein distance by rational distributions, 
and it can be shown (using Markov coupling and applying Cauchy-Schwarz and the data processing inequality) that both sides of \eqref{e16} are continuous in the 2-Wasserstein distance.
Henceforth $\mu$ is assumed to be rational (hence must have a finite support).

Suppose that $\bar{\gamma}$ is a rational distribution such that $W_2(\bar{\gamma},\gamma)\le \epsilon$.
If $Q_{\bar{Y}Z}$ is a distribution achieving the max in the definition of ${\sf w}(\bar{\gamma},\mu, R)$, 
and $Q_{Y\bar{Y}}$ is a distribution achieving the max in the definition of $W_2(\gamma,\bar{\gamma})$,
then we can construct $P_{Y\bar{Y}Z}$ under which $(Y,\bar{Y})\sim Q_{Y\bar{Y}}$,
$(\bar{Y},Z)\sim Q_{\bar{Y}Z}$,
and $Y-\bar{Y}-Z$ is a Markov chain.
We have 
\begin{align}
{\sf w}(\gamma,\mu, R)
&\ge \mathbb{E}_P[\left<Y,Z\right>]
\\
&\ge \mathbb{E}_P[\left<\bar{Y},Z\right>]
-\sqrt{\mathbb{E}[\|\bar{Y}-Y\|_2^2]
\mathbb{E}[\|Z\|_2^2]}
\\
&= 
{\sf w}(\bar{\gamma},\mu, R)
-\epsilon\sqrt{\mathbb{E}[\|Z\|_2^2]}.
\end{align}
Define 
$
\mathcal{I}:=\{N\in\mathbb{Z}\colon \textrm{$N>0$, $N\bar{\gamma}$  and $N\mu$ are both integer}\}
$.
In Lemma~\ref{lem1}, we show that 
\begin{align}
{\sf w}(\bar{\gamma},\mu, R)
\ge 
\liminf_{\mathcal{I}\ni N\to\infty}\mathbb{E}\left[\max_{z^N\in\mathcal{A}}\frac1{N}\left<\bar{Y}^N,z^N\right>\right].
\end{align}
By \cite[Lemma~3]{liu2025simple}, we can also construct coupling of $\bar{Y}^N$ uniform on the type class of $\bar{\gamma}$ and $Y^N\sim \gamma^{\otimes N}$ such that $\lim_{\mathcal{I}\ni N\to\infty}\frac1{N}\mathbb{E}[\|\bar{Y}^N-Y^N\|_2^2]
=
\lim_{\mathcal{I}\ni N\to\infty}
\mathbb{E}[W_2^2(\bar{\gamma},\widehat{P}_{Y^N})]\le 
\epsilon^2$, so that 
\begin{align}
&\quad \liminf_{\mathcal{I}\ni N\to\infty}\mathbb{E}\left[\max_{z^N\in\mathcal{A}}\frac1{N}\left<\bar{Y}^N,z^N\right>\right]
\nonumber\\
&\ge 
\liminf_{\mathcal{I}\ni N\to\infty}\mathbb{E}\left[\max_{z^N\in\mathcal{A}}\frac1{N}\left<Y^N,z^N\right>\right]
\nonumber\\
&\quad+
\liminf_{\mathcal{I}\ni N\to\infty}\mathbb{E}\left[\max_{z^N\in\mathcal{A}}\frac1{N}\left<\bar{Y}^N-Y^N,z^N\right>\right]
\label{e8}
\\
&\ge 
\liminf_{\mathcal{I}\ni N\to\infty}\mathbb{E}\left[\max_{z^N\in\mathcal{A}}\frac1{N}\left<Y^N,z^N\right>\right]
-C\epsilon 
\end{align}
where we applied the Cauchy-Schwarz inequality to the second term in \eqref{e8}, defining $C:= \max_{z\in\mathcal{Z}}\|z\|_2$.

Next we use the fact that $\delta_2$ defined in \eqref{e1.4} is equivalent to $\gamma_2$ up to a universal constant, and then specialize $\mu$ in \eqref{e1.4} to the uniform distribution $\mu_N$ on $\mathcal{A}$.
Then 
\begin{align}
\delta_2(\mathcal{A})
&\ge \min_{t\in \mathcal{A}}\int_0^{\infty}\sqrt{\ln\frac1{\mu_N(B(t,\lambda))}}d\lambda
\\
&\ge \int_0^{\infty}\min_{t\in \mathcal{A}}\sqrt{\ln\frac1{\mu_N(B(t,\lambda))}}d\lambda
\\
&=\sqrt{N}\int_0^{\infty}\min_{t\in \mathcal{A}}\sqrt{\ln\frac1{\mu_N(B(t,\sqrt{N}\sigma))}}d\sigma,
\end{align}
where $B(t,\sqrt{N}\sigma)$ is defined as the intersection of $\mathcal{C}_{\mu}$ and the ball centered at $t$ with radius $\sqrt{N}\sigma$.
Then applying Fubini's theorem and Fatou's lemma, we have 
\begin{align}
&\quad \liminf_{\mathcal{I}\ni N\to\infty}\mathbb{E}[\frac1{N}\delta_2(\mathcal{A})]
\nonumber\\
&\ge
\liminf_{\mathcal{I}\ni N\to\infty}\int_0^{\infty}\mathbb{E}\left[\min_{t\in \mathcal{A}}\sqrt{\frac1{N}\ln\frac1{\mu_N(B(t,\sqrt{N}\sigma))}}\right]d\sigma
\\
&\ge \int_0^{\infty}\liminf_{\mathcal{I}\ni N\to\infty}\mathbb{E}\left[\min_{t\in \mathcal{A}}\sqrt{\frac1{N}\ln\frac1{\mu_N(B(t,\sqrt{N}\sigma))}}\right]d\sigma.
\end{align}
For any $\sigma>0$ independent of $N$ and any $t$,
\begin{align}
L\mu_N(B(t,\sqrt{N}\sigma))
\sim {\rm Binom}\left(\frac{|B(t,\sqrt{N}\sigma)|}{|\mathcal{C}_{\mu}|},L\right).
\end{align}
Now suppose that $R>i_{\mu}(\sigma)$.
Using Lemma~\ref{lem_t2}, Lemma~\ref{lem:cycle-rounding}, and the fact that the number of types is polynomial in $N$, we can show that
\begin{align}
\left|\frac1{N}\ln\frac{|B(t,\sqrt{N}\sigma)|}{|\mathcal{C}_{\mu}|}
+
i_{\mu}(\sigma)\right|
\le c_N
\end{align}
for some vanishing sequence of $c_N>0$ (possibly dependent on $\sigma$ and $n$).
Then $\frac{|B(t,\sqrt{N}\sigma)|}{|\mathcal{C}_{\mu}|}L \ge e^{Nc}$ for some $c>0$.
From the Chernoff bound (Lemma~\ref{lem_cher}), 
we have 
\begin{align}
\mathbb{P}[|\ln\mu_N(B(t,\sqrt{N}\sigma))
+Ni_{\mu}(\sigma)|>2
]
\le \exp(-e^{cN/2})
\end{align} 
for large enough $N$.
By the union bound, 
\begin{align}
&\quad\mathbb{P}[|\ln\mu_N(B(t,\sqrt{N}\sigma))
+Ni_{\mu}(\sigma)|>2,\,
\exists t\in \mathcal{\mathcal{Z}}^N
]
\nonumber\\
&\le |\mathcal{Z}|^N\exp(-e^{cN/2})
\\
&\le \exp(-e^{cN/3})
\end{align} 
for all large enough $N$.
We then obtain
\begin{align}
&\quad\liminf_{\mathcal{I}\ni N\to\infty}\mathbb{E}\left[\min_{t\in \mathcal{A}}\sqrt{\frac1{N}\ln\frac1{\mu_N(B(t,\sqrt{N}\sigma))}}\right]
\label{e22}
\\
&\ge 
\liminf_{\mathcal{I}\ni N\to\infty}
(1-\exp(-e^{cN/3}))
\sqrt{i_{\mu}(\sigma)-\frac1{N}\ln 2}
\\
&=\sqrt{i_{\mu}(\sigma)}.
\end{align}
If $\sigma$ is smaller so that $R\le i_{\mu}(\sigma)$, we see that \eqref{e22} is lower bounded by $R$, by the monotonicity of $\mu_N(B(t,\sqrt{N}\sigma))$ in $\sigma$.
The proof is completed by collecting the sequence of bounds and taking $\epsilon\to0$.
\end{proof}

\subsection{Proof of Theorem~\ref{thm2}}
\begin{proof}
We abbreviate $\phi(\mu,\alpha)$ as $\phi(\alpha)$.
It is easy to check that $\inf_{\sigma>0}[i_{\mu}(\sigma^2)\wedge R+\alpha^{-2}\sigma^2]=\phi(\alpha)\wedge R$.
Thus by Lemma~\ref{lem_equi},
\begin{align}
&\quad\sup_{R>0}\{{\sf w}(\gamma,\mu, R)
-\beta R\}
\nonumber\\
&\ge 
\sup_{R>0}\{
\frac{c}{4}\int_0^{\infty} \inf_{\sigma>0}[i_{\mu}(\sigma^2)\wedge R+\alpha^{-2}\sigma^2]d \alpha
-\beta R\}
\\
&=\sup_{R>0}\{\frac{c}{4}\int_0^{\infty}\phi(\alpha)\wedge R\, d\alpha -\beta R\}
\\
&=\sup_{R>0}\{\frac{c}{4}\int_0^R\phi^{-1}(r)dr -\beta R\},
\label{e29}
\end{align}
where $\phi^{-1}$ denotes the inverse function of $\phi$, which is also an increasing function nonnegative.
Then the supremum in \eqref{e29} is achieved when $\frac{c}{4}\phi^{-1}(R)=\beta$. Plugging $R=\phi(\frac{4\beta}{c})$ back, 
we see \eqref{e29} equals $\frac{c}{4}\left[\phi(\frac{4\beta}{c})\cdot \frac{4\beta}{c}+\int_{\frac{4\beta}{c}}^{\infty}\phi(\alpha)d\alpha\right]-\beta\phi(\frac{4\beta}{c})
=\frac{c}{4}\int_{\frac{4\beta}{c}}^{\infty}\phi(\alpha)d\alpha$.
This proves the lower bound direction.
For the upper bound, note that again by Lemma~\ref{lem_equi},
\begin{align}
&\quad \sup_{R>0}\{{\sf w}(\gamma,\mu, R)
-\beta R\}
\nonumber\\
&\le 
\sup_{R>0}\{
\frac{2K}{2}\int_0^{\infty} \inf_{\sigma>0}[i_{\mu}(\sigma^2)\wedge R+\alpha^{-2}\sigma^2]d \alpha
-\beta R\}
\end{align}
which, following steps similar to those in the lower bound proof, is seen to be equivalent to
$K\int_{\frac{\beta}{K}}^{\infty}\phi(\alpha)d\alpha$.
\end{proof}

\section{Auxiliary Lemmas}
The following result follows from the Chernoff bound, which can be found for example in \cite{vershynin2018high}:
\begin{lemma}\label{lem_cher}
Let $X_1,\dots,X_L$ be i.i.d.\ Bernoulli random variables with parameter $p$. 
Let $S_L:=\sum_{i=1}^LX_i$ and $\mu:=\mathbb{E}[S_L]$.
Then
\begin{align}
\mathbb{P}[S_L\ge t]
\le e^{-\mu}\left(\frac{e\mu}{t}\right)^t,
\quad\forall t\in [\mu,\infty);
\\
\mathbb{P}[S_L\le t]
\le e^{-\mu}\left(\frac{e\mu}{t}\right)^t,
\quad\forall t\in(0,\mu].
\end{align}
\end{lemma}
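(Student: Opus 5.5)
We prove the two Chernoff-type tail bounds of Lemma~\ref{lem_cher} by the exponential moment method (Markov's inequality applied to $e^{\lambda S_L}$), and then optimize over $\lambda$. The only inputs are independence of the $X_i$ and the elementary inequality $1+x\le e^x$.

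\textbf{Moment generating function.} For any real $\lambda$, independence gives
\begin{align}
\mathbb{E}[e^{\lambda S_L}]=(1-p+pe^{\lambda})^L=(1+p(e^{\lambda}-1))^L\le \exp(\mu(e^{\lambda}-1)),
\end{align}
where $\mu=Lp$.

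\textbf{Upper tail.} Fix $t\ge\mu$ and take $\lambda\ge0$. Markov's inequality yields
\begin{align}
\mathbb{P}[S_L\ge t]\le e^{-\lambda t}\,\mathbb{E}[e^{\lambda S_L}]\le \exp(\mu(e^{\lambda}-1)-\lambda t).
\end{align}
Choose $\lambda=\ln(t/\mu)$, which is nonnegative because $t\ge\mu$. Then $e^{\lambda}=t/\mu$, so the exponent equals
\begin{align}
t-\mu-t\ln(t/\mu),
\end{align}
and hence
\begin{align}
\mathbb{P}[S_L\ge t]\le e^{-\mu}\left(\frac{e\mu}{t}\right)^t.
\end{align}
If $\mu=0$, then $S_L=0$ almost surely, and the bound holds trivially for every $t>0$.

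\textbf{Lower tail.} Fix $t\in(0,\mu]$ and take $\lambda\le0$. Applying Markov's inequality to $e^{\lambda S_L}$ gives
\begin{align}
\mathbb{P}[S_L\le t]=\mathbb{P}[e^{\lambda S_L}\ge e^{\lambda t}]\le \exp(\mu(e^{\lambda}-1)-\lambda t).
\end{align}
Again choose $\lambda=\ln(t/\mu)$, which is now nonpositive because $t\le\mu$. The exponent is the same expression $t-\mu-t\ln(t/\mu)$, so
\begin{align}
\mathbb{P}[S_L\le t]\le e^{-\mu}\left(\frac{e\mu}{t}\right)^t,
\end{align}
which is the second claimed bound.

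No step here is a real obstacle. The only point needing care is the sign of $\lambda$: it must be nonnegative in the upper tail and nonpositive in the lower tail. The choice $\lambda=\ln(t/\mu)$ has the correct sign in each case precisely because of the restrictions $t\ge\mu$ and $t\le\mu$, respectively.
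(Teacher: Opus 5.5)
Your proof is correct and is the standard exponential-moment argument: bound the MGF via $1+x\le e^x$, apply Markov's inequality, and take $\lambda=\ln(t/\mu)$, whose sign matches $t\ge\mu$ or $t\le\mu$. This is the Chernoff bound the paper cites from \cite{vershynin2018high} rather than proving; one cosmetic point is that at $t=\mu$ (so $\lambda=0$) your lower-tail ``$=$'' should be ``$\le$'', which does not affect the conclusion.
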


The following simple observation (see for example \cite[Lemma 7]{liu2025simple}) is crucial for several results in \cite{liu2025simple},
including deducing the majorizing measure theorem from \eqref{e_rd},
and bounding \eqref{e_rd} in the case of Gaussian vector $\mu$.
\begin{lemma}\label{lem_equi}
Suppose that $y\in(0,\infty)$ is a decreasing function of $x\in(0,\infty)$.
Then
\begin{align}
2\int_0^{\infty} ydx
\le 
\int_0^{\infty}\min_x\{\alpha^{-2}x^2+y^2\}d\alpha
\le 
4\int_0^{\infty} ydx
\end{align}
\end{lemma}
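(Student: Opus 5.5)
The plan is to sandwich the integrand between two multiples of one explicit function whose integral can be computed exactly by a layer-cake argument. For $\alpha>0$ let $g(\alpha):=\inf_{x>0}\{\alpha^{-2}x^2+y(x)^2\}$. I read the $\min$ in the statement as this infimum, since a minimizer need not exist. Define
\begin{align}
m(\alpha):=\inf_{x>0}\max\{x/\alpha,\;y(x)\}.
\end{align}
Since $\max\{a,b\}^2\le a^2+b^2\le 2\max\{a,b\}^2$ for $a,b\ge0$, taking infima over $x$ gives
\begin{align}
m(\alpha)^2\le g(\alpha)\le 2m(\alpha)^2 .
\end{align}
So both inequalities of Lemma~\ref{lem_equi} follow from the single identity $\int_0^\infty m(\alpha)^2\,d\alpha=2\int_0^\infty y\,dx$. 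As a sanity check, for a step function $y=a\mathbf 1_{[0,b]}$ one computes $g(\alpha)=\min\{a^2,b^2/\alpha^2\}$, whose integral is exactly $2ab$. This shows the lower constant $2$ is sharp.

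To prove the identity, I would use the layer-cake formula for $m^2$:
\begin{align}
\int_0^\infty m(\alpha)^2\,d\alpha=\int_0^\infty 2u\,\bigl|\{\alpha>0\colon m(\alpha)>u\}\bigr|\,du .
\end{align}
Let $x^*(u):=\sup\{x>0\colon y(x)>u\}$ (with $\sup\emptyset=0$) be the generalized inverse of the decreasing function $y$. The layer-cake formula for $y$ gives $\int_0^\infty y\,dx=\int_0^\infty x^*(u)\,du$.

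The key step is to identify the superlevel sets of $m$: up to the single endpoint, $\{\alpha\colon m(\alpha)>u\}=(0,x^*(u)/u)$.
\begin{itemize}
\item Suppose $\alpha<x^*(u)/u$. Take any $x$. If $y(x)\le u$ then $x\ge x^*(u)$, so $x/\alpha\ge x^*(u)/\alpha>u$. If instead $y(x)>u$, the maximum already exceeds $u$. This gives $\max\{x/\alpha,y(x)\}>u$ for every $x$. To get strict inequality for the infimum, I would rerun the argument at a level $u'$ slightly above $u$ and use monotonicity in $u$.
\item Suppose $\alpha>x^*(u)/u$. Choose $x$ slightly larger than $x^*(u)$ with $x/\alpha<u$. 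Then $y(x)\le u$, so $m(\alpha)\le u$.
\end{itemize}
Hence the measure of the superlevel set is $x^*(u)/u$, and $\int_0^\infty m^2\,d\alpha=\int_0^\infty 2u\cdot\frac{x^*(u)}{u}\,du=2\int_0^\infty y\,dx$.

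Combining this with the sandwich yields $2\int y\,dx\le\int g\,d\alpha\le 4\int y\,dx$, as claimed. The main obstacle is the careful identification of the superlevel sets: handling strict versus non-strict inequalities, jumps of $y$, and the fact that the infimum defining $m$ need not be attained. I would handle these by working with $u\pm\epsilon$ and noting that the exceptional sets (countably many jump levels and single endpoints) have Lebesgue measure zero, so they do not affect the integrals. Degenerate cases where $\int y\,dx=\infty$ or $x^*(u)=\infty$ follow from the same formulas by monotone convergence.
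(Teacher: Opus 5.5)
Your proof is correct, and its skeleton matches the paper's. The paper compares the integrand with its value $2y(x(\alpha))^2$ at the balancing point $x(\alpha)$ defined by $\alpha=x/y(x)$, and asserts the ratio lies in $[1/2,1]$. When $y$ is continuous, $y(x(\alpha))$ is exactly your $m(\alpha)$, so your $m^2\le g\le 2m^2$ is the same sandwich, with a cleaner justification.

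The genuine difference is how you evaluate $\int_0^\infty m(\alpha)^2\,d\alpha=2\int_0^\infty y\,dx$. The paper parametrizes the curve by $x$, writes $d\alpha=(y\,dx-x\,dy)/y^2$, and implicitly integrates by parts to turn $-\int x\,dy$ into $\int y\,dx$. That is a two-line calculus computation, but it tacitly needs three things: $y$ continuous (otherwise $x(\alpha)$ need not exist), enough regularity to make sense of $\int x\,dy$, and a check that the boundary terms $xy$ vanish.

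Your layer-cake identification $\{m>u\}=(0,x^*(u)/u)$ avoids all of this. It works for an arbitrary nonincreasing $y\ge 0$, including jumps, flat stretches, infinite integrals, and $y$ vanishing beyond some point. The last case is what actually occurs when the lemma is applied in the proof of Theorem~\ref{thm2}, where $y^2=R\wedge i_\mu$ is zero at large distortion. Your step-function example also shows the lower constant $2$ is sharp, which the paper does not note.

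One small precision: in the strictness step you need right-continuity of $x^*$, not mere monotonicity. Monotonicity only gives $x^*(u')\le x^*(u)$, which points the wrong way. Right-continuity does hold, since $\{y>u'\}\uparrow\{y>u\}$ as $u'\downarrow u$, and your measure-zero fallback covers this in any case.
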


\begin{proof}
For any $\alpha\in(0,\infty)$, there is a unique $x(\alpha)$ satisfying $\alpha=\frac{x(\alpha)}{y(x(\alpha))}$. Then $x$ and $y$ can be parameterized by $\alpha\in(0,\infty)$, and 
$
\int_0^{\infty}
(\alpha^{-2}x^2+y^2)d\alpha
=
\int_{\alpha=0}^{\infty}
2y^2\cdot\frac{ydx-xdy}{y^2}
=4\int_0^{\infty}
ydx
$.
The claim then follows since for any $\alpha$, $\frac{\min_{x\in(0,\infty)}\{\alpha^{-2}x^2+y^2(x)\}}{\alpha^{-2}x^2(\alpha)+y^2(x(\alpha))}\in[1/2,1]$.
\end{proof}
A similar idea (though in the form of series summation rather than integral) was previously used in proving equivalent forms of the majorizing measure theorem \cite{talagrand1994constructions}.

As for approximating a general coupling by a type, we need the following observation; see for example \cite[Lemma~4]{liu2025simple} for a proof.
\begin{lemma}\label{lem:cycle-rounding}
Suppose that $\mathcal{X}$ and $\mathcal{Y}$ are finite sets,
$N>0$ is an integer,
and $P_{XY}$ is a distribution with the property that both $NP_X$ and $NP_Y$ are integer.
Then there exists $Q_{XY}$ satisfying $Q_X=P_X$, $Q_Y=P_Y$, $\max_{x,y}|Q_{XY}(x,y)-P_{XY}(x,y)|\le \frac1{N}$,
and that $NQ_{XY}$ is integer.
\end{lemma}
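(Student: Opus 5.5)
The plan is to round the scaled matrix $M:=NP_{XY}$ (indexed by $\mathcal{X}\times\mathcal{Y}$) to an integer matrix $M'$ with the same row and column sums. Each entry will be moved only to its floor or its ceiling. Setting $Q_{XY}:=M'/N$ then gives $Q_X=P_X$ and $Q_Y=P_Y$, makes $NQ_{XY}$ integer, and gives $|Q_{XY}(x,y)-P_{XY}(x,y)|\le \frac1N$, since $|M'(x,y)-M(x,y)|<1$ entrywise. The hypothesis that $NP_X$ and $NP_Y$ are integer means exactly that every row sum and every column sum of $M$ is an integer.

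The rounding uses cycle-canceling. Call an entry \emph{fractional} if $M(x,y)\notin\mathbb{Z}$, and let $G$ be the bipartite graph on the vertex set $\mathcal{X}\sqcup\mathcal{Y}$ whose edges are the fractional entries. Every vertex of $G$ has degree $0$ or at least $2$. Indeed, if row $x$ contained exactly one fractional entry, the row sum would be an integer plus a non-integer, contradicting integrality of $NP_X(x)$; columns are handled the same way. So if $G$ has any edge, it contains a cycle, found by walking along edges without backtracking until a vertex repeats. Because $G$ is bipartite, the cycle is even, and it alternates between row-steps and column-steps.

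Label the cycle's edges alternately $+$ and $-$, and replace $M$ by $M+\epsilon D$, where $D$ is $+1$ on the $+$ edges, $-1$ on the $-$ edges, and $0$ elsewhere. Each vertex on the cycle meets exactly one $+$ edge and one $-$ edge, so all row and column sums are preserved for every real $\epsilon$. Choose $\epsilon$ to be the smallest positive value at which some cycle entry reaches an integer. For $|\epsilon|$ up to that point, every cycle entry stays within $[\lfloor M(x,y)\rfloor,\lceil M(x,y)\rceil]$ of its original value, so no entry leaves its original floor--ceiling interval. Entries that are already integers lie off $G$ and are never touched.

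Each step strictly decreases the number of fractional entries. The procedure therefore terminates after at most $|\mathcal{X}||\mathcal{Y}|$ steps with an integer matrix $M'$ having the same margins and satisfying $M'(x,y)\in\{\lfloor M(x,y)\rfloor,\lceil M(x,y)\rceil\}$ for all $(x,y)$. Nonnegativity is automatic, because $\lfloor M(x,y)\rfloor\ge 0$. The only delicate point, and the one I expect to need the most care, is the degree-$\ge 2$ claim that guarantees a cycle exists; it rests entirely on the integrality of the margins. An alternative route is to note that the transportation polytope with integer margins and the box constraints $\lfloor M\rfloor\le M'\le\lceil M\rceil$ is nonempty (it contains $M$) and has integral vertices by total unimodularity of the bipartite incidence matrix.
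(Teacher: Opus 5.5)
Your proof is correct and follows essentially the same cycle-rounding argument the lemma is named for; the paper gives no proof itself and defers to \cite[Lemma~4]{liu2025simple}. Integer margins force every vertex of the fractional-entry graph to have degree $0$ or at least $2$, and shifting $\pm\epsilon$ alternately around an even cycle preserves the margins while making at least one more entry integral. The one implicit step is that a still-fractional entry lying strictly between $\lfloor M(x,y)\rfloor$ and $\lceil M(x,y)\rceil$ has the same floor and ceiling as the original entry, so the box $[\lfloor M\rfloor,\lceil M\rceil]$ is preserved across iterations, which is what gives the entrywise bound $1/N$.
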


\newpage
\bibliographystyle{IEEEtran}
\bibliography{references.bib}
\end{document}